\documentclass[11pt]{article}

\usepackage{mathptmx}

\usepackage{setspace}
\usepackage{algorithm}
\usepackage{algorithmic}
\usepackage[utf8]{inputenc} 
\usepackage[T1]{fontenc}    
\usepackage{hyperref}       
\usepackage{url}            
\usepackage{booktabs}       
\usepackage{tabularx}
\usepackage{amsthm}
\usepackage{amsmath}
\usepackage{amsfonts}       
\usepackage{nicefrac}       
\usepackage{microtype}      
\usepackage{lipsum}
\usepackage{fancyhdr}       
\usepackage{graphicx}       
\graphicspath{{media/}}     
\usepackage{caption}
\usepackage[authoryear]{natbib}
\graphicspath{ ./ }
\DeclareMathOperator*{\argmin}{arg\,min}

\usepackage{enumitem}

\usepackage[most]{tcolorbox}

\newtcolorbox[auto counter, number within=section]{example}[2][]{%
  breakable,
  fonttitle=\bfseries,
  title=Example~\thetcbcounter: #2,#1
}

\usepackage{geometry}
\usepackage{array}
\geometry{margin=1in} 

\pagestyle{fancy}
\thispagestyle{empty}
\rhead{ \textit{ }} 

\newtheorem*{theorem*}{Theorem}


\title{Finding the Center and Centroid of a Graph with Multiple Sources
}

\author{
  Matthew Chou \\
}

\setlength{\parindent}{2em}
\begin{document}
\maketitle
\onehalfspacing

\begin{abstract}
We consider the problem of finding a "fair" meeting place when S people want to get together. Specifically, we will consider the cases where a "fair" meeting place is defined to be either 1) a node on a graph that minimizes the maximum time/distance to each person or 2) a node on a graph that minimizes the sum of times/distances to each of the sources. In graph theory, these nodes are denoted as the center and centroid of a graph respectively. In this paper, we propose a novel solution for finding the center and centroid of a graph by using a multiple source alternating Dijkstra's Algorithm. Additionally, we introduce a stopping condition that significantly saves on time complexity without compromising the accuracy of the solution. The results of this paper are a low complexity algorithm that is optimal in computing the center of $S$ sources among $N$ nodes and a low complexity algorithm that is close to optimal for computing the centroid of $S$ sources among $N$ nodes.
\\
\end{abstract}

\section{Introduction}
\label{sec:Introduction}
There exist many real world problems where finding the center of a graph is deemed useful where the center can be defined in many different ways. As an example, in the field of computer networks, one is often interested in finding the center of nodes in a network to determine where to place servers \citep{serverproblem}. Another example is the facility location problem \citep{facilityproblem} where one wants to build a set of facilities that lies within center of a set of nodes where each node may represent residential homes, businesses, etc. In this paper, we want to examine a problem where several people are interested in meeting at a common location. This location should be "fair" in the sense that it either minimizes the maximum time/distance for all of the people or it minimizes the total time/distance for all of the people. Solving this problem amounts to finding the center or centroid of the nodes that represent the people in the graph. What distinguishes this problem from the aforementioned problems is that the previous problems are solvable offline whereas in our scenario, there may be a need to solve it continuously as the ideal location of the center may vary with time. As a result we need an efficient solution. Furthermore, in our problem we are trying to find the center for a few nodes that exist among many nodes whereas the previous problems are trying to find the center among many nodes. To the best of our knowledge, our problem has not been widely explored. 

The above problems as well as our problems are equivalent to finding the center of nodes in a graph. In the facility location problem \citep{facilityproblem} and network server problem \citep{serverproblem}, there may be N nodes and one wants to find the center of these N nodes. In our problem, there are S nodes among N total nodes, where $S << N$, and we want to find the center of the S nodes among the N total nodes. The former problem is widely known and studied under the general topic of the K-Center problem \citep{Kcenter}, K-Median problem \citep{Kmedian}, and the Jordan center problem \citep{Jordancenter}. Our problem can be cast as the Jordan center problem except the center is to be found among S nodes instead of N nodes. A well known solution to the Jordan center problem is the Floyd-Warshall algorithm \citep{floydWarshall}. There have been many other theoretical works that have studied solutions for these problems \citep{solutions}.

The contributions of this paper are as follows 
\begin{itemize}
    \item We construct the problem of finding the center and centroid of S sources among N nodes where $S<<N$
    \item We propose a solution using a multiple source Dijkstra's Algorithm or A* Algorithm
    \item We propose a stopping condition for our algorithm that is optimal for finding the center while providing significant complexity savings
    \item We propose a modified stopping condition based on the stopping condition that is used for finding the center to find the centroid. The modified stopping condition may be suboptimal for finding the centroid, and we measured through simulation the amount of degradation in addition to the amount of complexity savings.
\end{itemize}
We will start by providing a background followed by describing our algorithmic framework and then describe several methods that improve upon the efficiency of our algorithmic framework.

\section{Background}
\label{sec:Background}
In this paper we define the problem of finding the center of S sources among N nodes as the S-source-center problem. The solution provided in this paper requires the definition of the center as well as an explanation of Dijkstra's Algorithm and  A* Algorithm. 

\subsection{Center and Centroid Definition}
To define the center of the S-source-center problem, we must first discuss the center in a broader context. The \textit{eccentricity} of a node is defined as the maximum distance from a node to any other node within the graph and the center is defined as node(s) with minimum eccentricity. We define the eccentricity of a node of the S-source-center problem as the maximum distance from the node to any of the S source nodes and the minimum of this eccentricity as the center. In addition to the center, we also need to define the centroid of a graph. The centroid of a graph is defined as the node with the minimum sum of distances to all other nodes within a graph. In the context of the S-source-center problem, the centroid of a graph is the set of nodes with the minimum sum of distances to all of the source nodes within a graph. The difference between the center and the centroid is apparent in the weighting of the nodes. Whereas the center will equalize the distances to each of the source nodes, the centroid gives more weight to clusters of nodes as opposed to singular nodes. In practical scenarios, the preference of finding either the center or centroid for the S-source-center problem will likely be subjective.

\subsection{Dijkstra's Algorithm}
Dijkstra's Algorithm \citep{dijkstra} is a graph exploration algorithm that starts from a source node and explores outwards, keeping track of the shortest distance to each node. Each node in the graph is visited only once and the output of the algorithm is a single sourced shortest path tree which is commonly used for mapping and network routing applications. The algorithm maintains a priority queue in which nodes are extracted as they get marked as finished. The priority queue starts at the source node and adds all neighboring nodes into the queue. After every iteration, all neighboring nodes of the current node have its distance computed and updated if a smaller distance is found. The node that is extracted from the list is the node with the minimum distance value within the priority queue.

A summarized version of the algorithm is provided for reference (see Algorithm~\ref{alg:dijkstras}).

\begin{algorithm}
\caption{Dijkstra's Algorithm}
\label{alg:dijkstras}

1) Initialize the distance as either 0 for the source node or $\infty$ for all other nodes

2) Extract the node from the priority queue with the smallest distance. Call this node the current node

3) For each of the current node's unvisited neighboring nodes, calculate the distance from the current node and update it if the new distance is smaller

4) Mark the current node as visited and extract it from the priority queue. A visited node cannot be revisited. Repeat steps 2-4 until all nodes have been visited

\end{algorithm}

\subsection{A* Algorithm}
\label{sec:astar}
The A* algorithm \citep{Astar} is a graph exploration algorithm that searches for the shortest path to a destination. It utilizes Dijkstra's Algorithm with a modified cost function to create a more efficient algorithm for finding a path to a single point. The cost function for the A* algorithm is $f(n) = g(n) + h(n)$ where $g(n)$ is the actual cost from the source node and $h(n)$ is a heuristic function that represents an estimated cost to the destination node. The purpose of adding a heuristic function $h(n)$ to the cost function is to help steer the order of evaluation of nodes towards nodes that are in the direction of the destination. The A* algorithm also utilizes a priority queue that is based on the f(n) value and extracts the node with the smallest $f(n)$ value. There are many different heuristic functions that can be used with the algorithm and depending on the heuristic function used, one can get different results. Common heuristic functions are Euclidean distance, Manhattan distance, etc.

\section{Previous Works}
\label{sec:prevWorks}
The previous problems of the K-Center \citep{Kcenter}, K-Median \citep{Kmedian}, and Jordan Center \citep{Jordancenter} all have solutions \citep{solutions} which are different than ours. First, the K-Center problem utilizes a selecting algorithm that given a set of data, selects the best points to create facilities either at random, through 2 approximation method, or other means. Like the K-Center problem, the K-Median problem utilizes a selecting algorithm that works either randomly or updates itself upon every iteration. Both of these solutions do not have any practical application to the S-source-center problem we proposed. The Jordan center, however, is very similar to the S-source-center problem, with the only difference being how many nodes the problem considers for finding the center. The Jordan center requires the consideration of all nodes whilst the S-source-center problem only requires the consideration of the S source nodes. The common approach to the Jordan center problem is the Floyd-Warshall \citep{floydWarshall} algorithm which solves the problem in O($V^3$) time. On the other hand, our algorithm is comprised of S alternating Dijkstra's Algorithms from each of the S sources. Each Dijkstra's Algorithm has time complexity of $O((V+E)log(V))$ given a binary min-heap implementation \citep{fredman1987fibonacci}. Therefore, by alternating S Dijkstra's Algorithms, the time complexity is $O(S(V+E)log(V))$. Furthermore, we propose a stopping condition so that not all vertices, V, need to be explored in each of the S Dijkstra's Algorithm which leads to significant time complexity savings without compromising accuracy.

\section{Algorithmic Framework}
\label{sec:algorithmFramework}
The two different problems of finding the centroid and center of the S-source-center problem have different objectives requiring different algorithms for each one. In the following subsections, we will present both problems as well as a solution and an example for each one. In our algorithms, the reference to distance in a graph is general and may represent either distance, time, or some other measure. Furthermore, our algorithm works for both undirected and directed graphs with positive weights.

\subsection{Center}
\label{sec:Center}
To find the center of the S-source-center problem we have to solve the following optimization problem:
\begin{equation}
\hat{v} = \argmin_{v_i \in V} \max_{s_j \in S} d(v_i, s_j)
\end{equation}
where V is the set of all vertices within the graph and S is the set of all source nodes.

 We want to determine the node $\hat{v}$ with the minimum value of f($v_j) = \max_{s_j \in S} d(v_i, s_j)$. The first proposed solution is running a Dijkstra's Algorithm in alternating steps from each source node and storing the shortest path for every node. By storing the shortest path from every node to every source node, one is able to determine the f($v_j$) value for all nodes by keeping track of the maximum shortest path to all source nodes. The algorithm will also maintain a separate priority queue for each source node and alternate between the queues after a node has been extracted from a queue. This is done so that the algorithm switches between source nodes while exploring. Similar to our algorithm is the Bidirectional Dijkstra's Algorithm \citep{bidijkstra}, but unlike Bidirectional Dijkstra's Algorithm, our algorithm it is used to find the shortest path between two nodes. Our proposed algorithm is shown in Algorithm~\ref{alg:centerAlgorithm}).

\begin{algorithm}
\caption{Multiple Source Dijkstra's Algorithm for Finding the Center}
\label{alg:centerAlgorithm}
1) Initialization: Create a priority queue for each source node and initialize the distance of source nodes to 0 and all the other nodes to $\infty$

2) Selection: Pick the unvisited node with the smallest d($s_i$,v) (initially the source node) and extract node

3) Relaxation: For the extracted node, check all neighboring nodes and compare/update its distances. If extracted node has been visited by all sources, keep track of its maximum distance to all source nodes

4) Alternation: The extracted node is marked as visited and will not be visited again. Repeat steps 2-3 alternating between each of the sources, for all nodes

5) Stopping: Find the node with the minimum eccentricity to all nodes. Alternatively, the minimum eccentricity could be kept track of whenever a node is extracted and has been visited by all sources.\\
\end{algorithm}

\begin{figure}
\begin{center}
\includegraphics[scale=0.25]{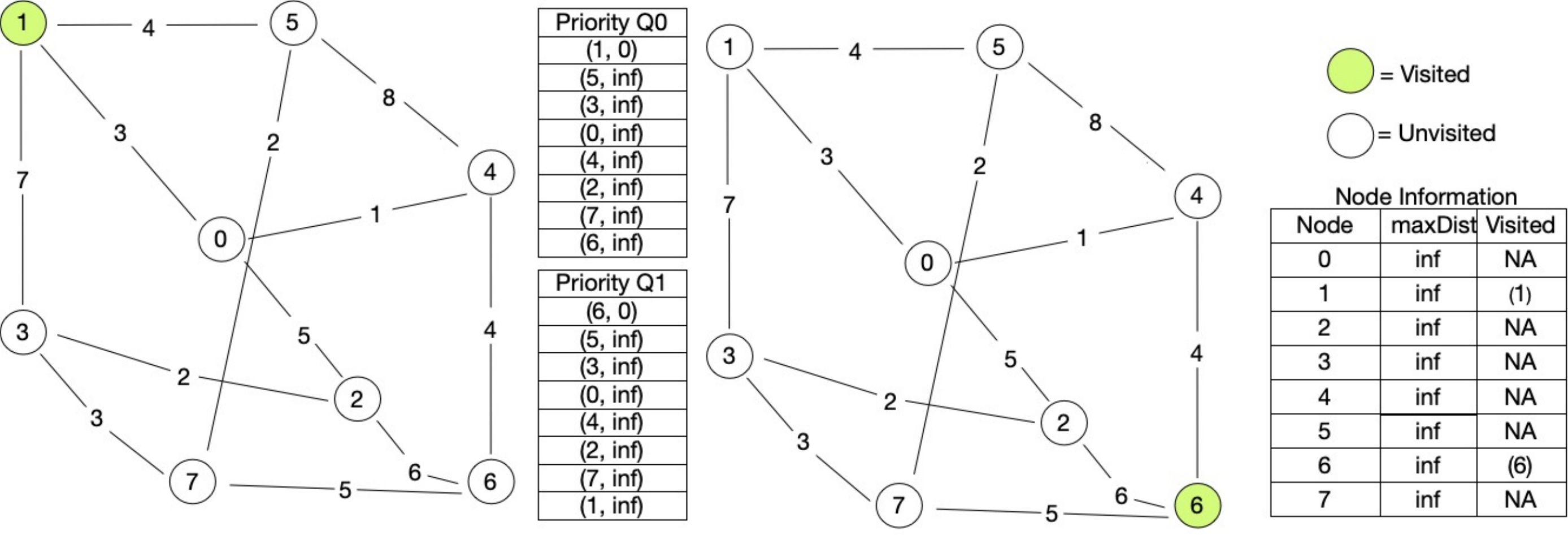}
\caption{The state of our proposed algorithm at the start of the first iteration. During this iteration, Node 1 is extracted from priority queue 0, and all it neighboring nodes' distances are calculated and updated. Similarly, Node 6 is extracted from priority queue 1 and all its neighboring nodes' distances are calculated and updated}
\label{fig:figure_1}
\end{center}
\end{figure}

\begin{figure}
\begin{center}
\includegraphics[scale=0.25]{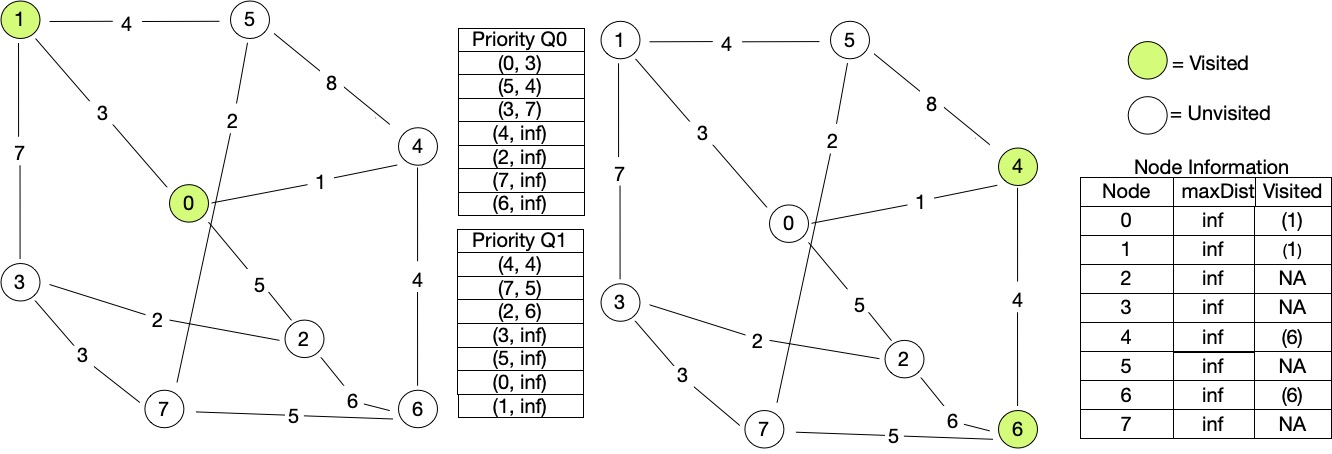}
\caption{The state of our proposed algorithm at the start of the second iteration. During this iteration, Node 0 is extracted from priority queue 0, and all it neighboring nodes' distances are calculated and updated. Similarly, Node 4 is extracted from priority queue 1 and all its neighboring nodes' distances are calculated and updated}
\label{fig:figure_2}
\end{center}
\end{figure}

\begin{figure}
\begin{center}
\includegraphics[scale=0.25]{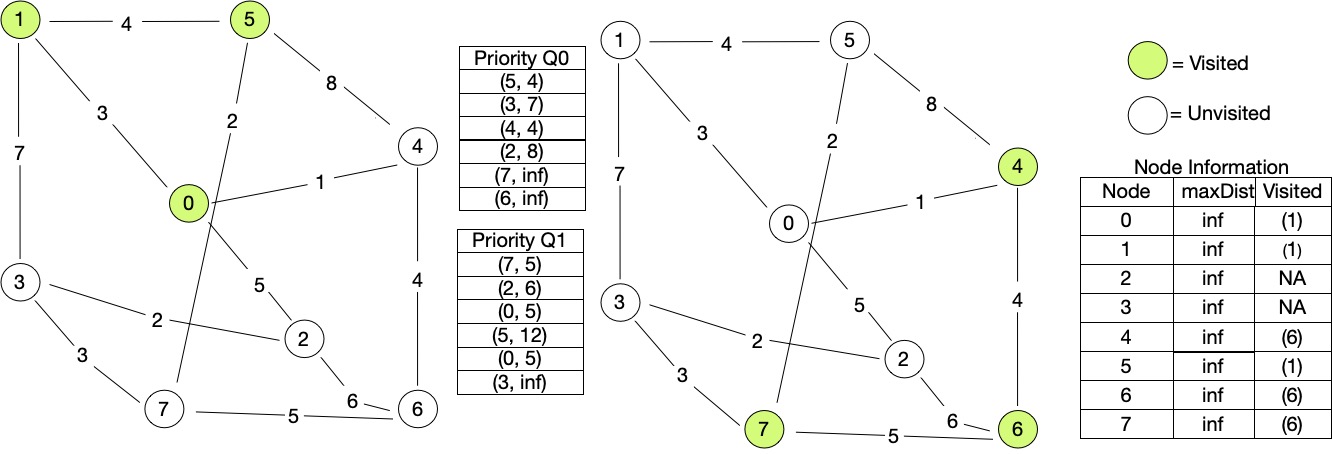}
\caption{The state of our proposed algorithm at the start of the third iteration. During this iteration, Node 5 is extracted from priority queue 0, and all it neighboring nodes' distances are calculated and updated. Similarly, Node 7 is extracted from priority queue 1 and all its neighboring nodes' distances are calculated and updated}
\label{fig:figure_3}
\end{center}
\end{figure}

\begin{figure}
\begin{center}
\includegraphics[scale=0.25]{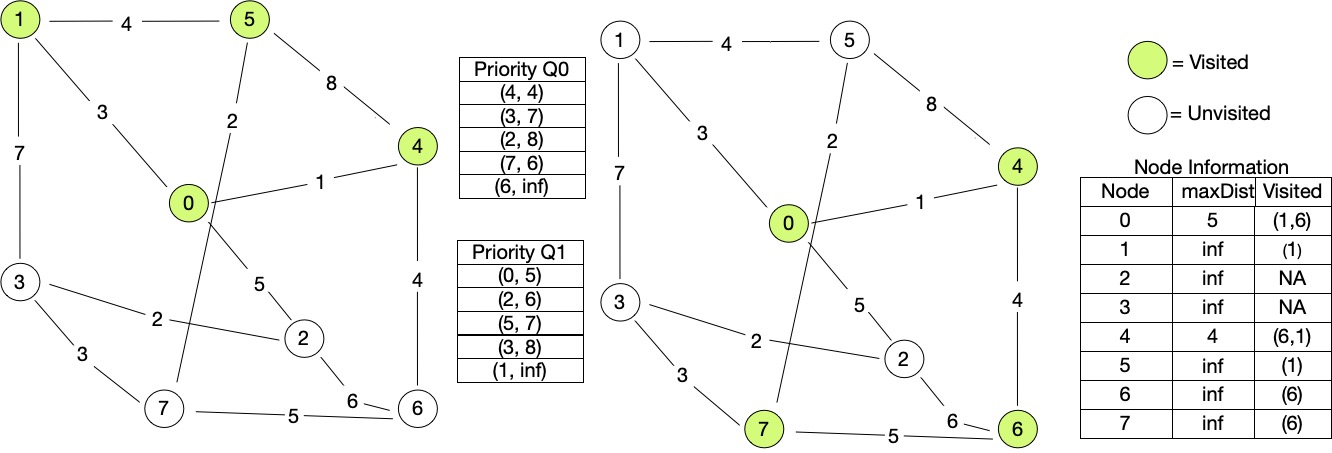}
\caption{The state of our proposed algorithm at the start of the fourth iteration. During this iteration, Node 4 is extracted from priority queue 0, and all it neighboring nodes' distances are calculated and updated. Similarly, Node 0 is extracted from priority queue 1 and all its neighboring nodes' distances are calculated and updated}
\label{fig:figure_4}
\end{center}
\end{figure}

\begin{figure}
\begin{center}
\includegraphics[scale=0.25]{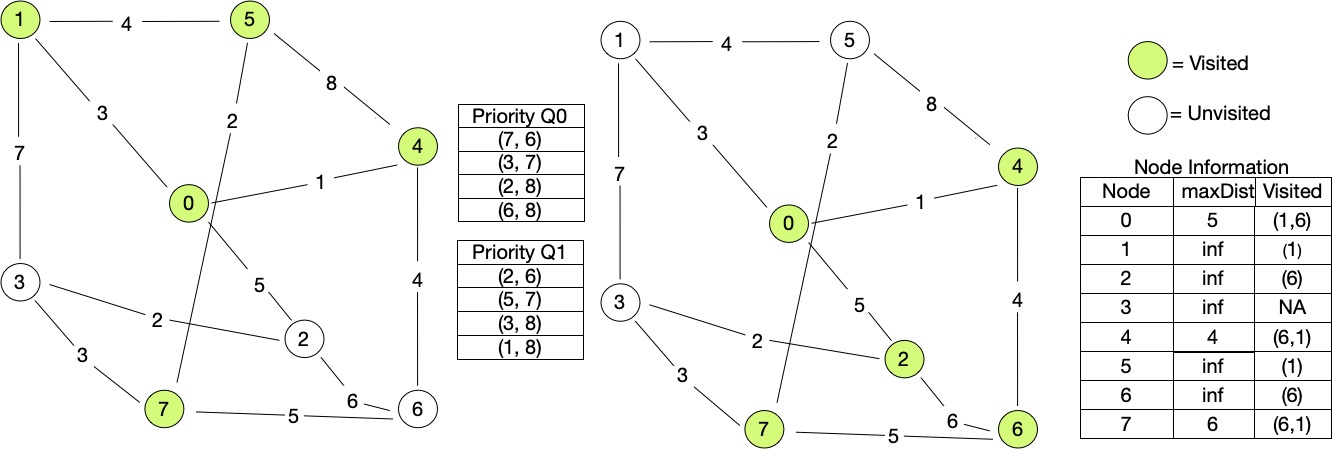}
\caption{The state of our proposed algorithm at the start of the fifth iteration. During this iteration, Node 7 is extracted from priority queue 0, and all it neighboring nodes' distances are calculated and updated. Similarly, Node 2 is extracted from priority queue 1 and all its neighboring nodes' distances are calculated and updated}
\label{fig:figure_5}
\end{center}
\end{figure}

\begin{figure}
\begin{center}
\includegraphics[scale=0.25]{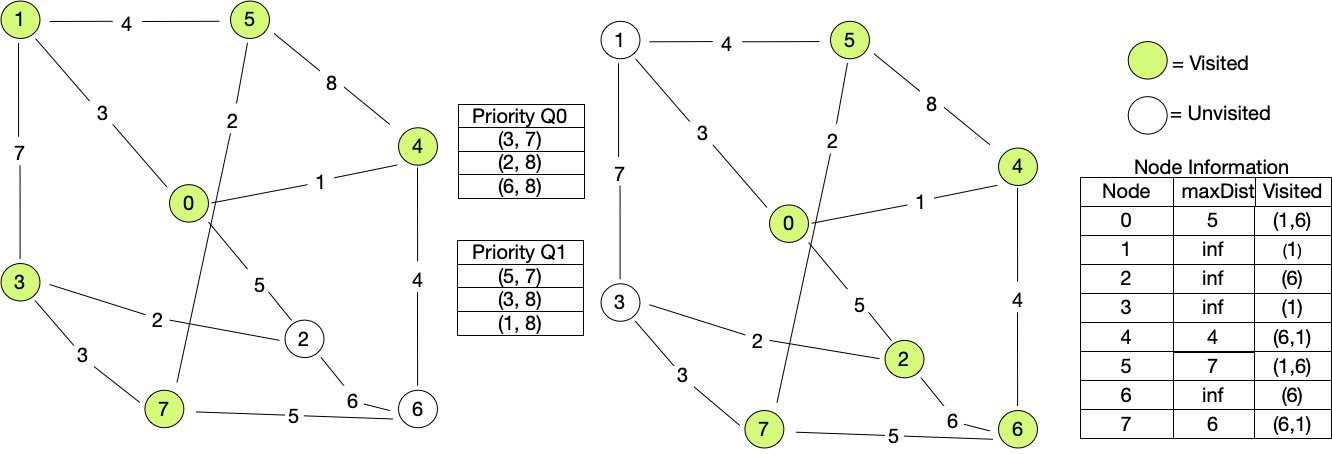}
\caption{The state of our proposed algorithm at the start of the sixth iteration. During this iteration, Node 3 is extracted from priority queue 0, and all it neighboring nodes' distances are calculated and updated. Similarly, Node 5 is extracted from priority queue 1 and all its neighboring nodes' distances are calculated and updated}
\label{fig:figure_6}
\end{center}
\end{figure}

\begin{figure}
\begin{center}
\includegraphics[scale=0.25]{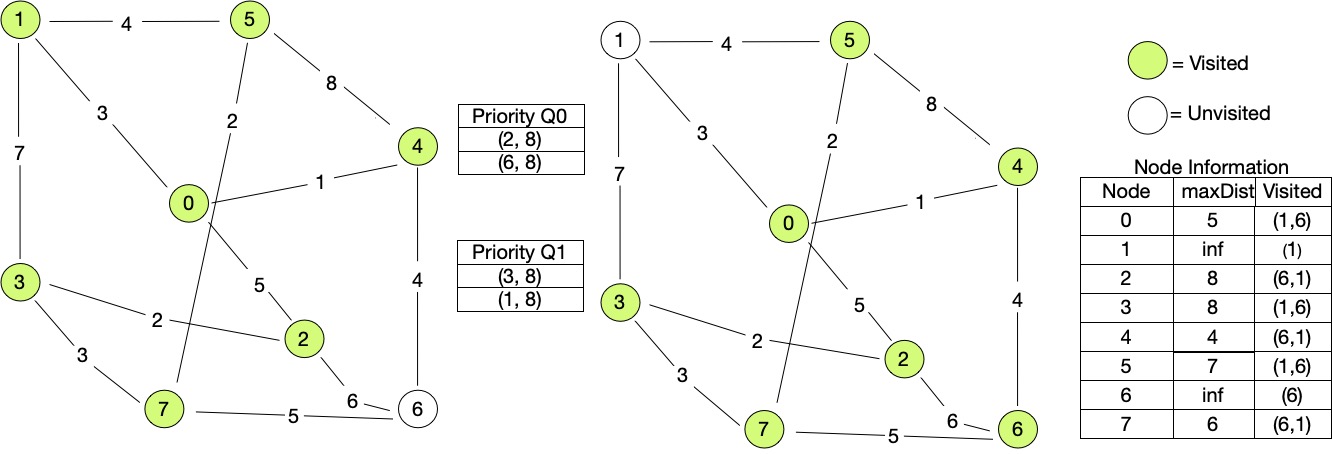}
\caption{The state of our proposed algorithm at the start of the seventh iteration. During this iteration, Node 2 is extracted from priority queue 0, and all it neighboring nodes' distances are calculated and updated. Similarly, Node 3 is extracted from priority queue 1 and all its neighboring nodes' distances are calculated and updated}
\label{fig:figure_7}
\end{center}
\end{figure}

\begin{figure}
\begin{center}
\includegraphics[scale=0.25]{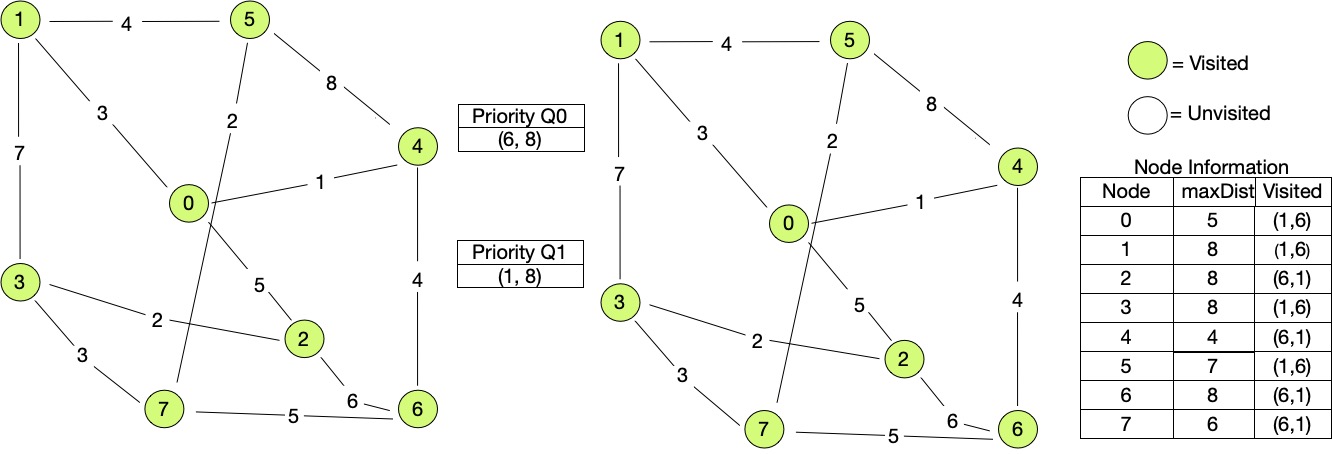}
\caption{The state of our proposed algorithm at the start of the eigthth iteration. During this iteration, Node 6 is extracted from priority queue 0, and all it neighboring nodes' distances are calculated and updated. Similarly, Node 1 is extracted from priority queue 1 and all its neighboring nodes' distances are calculated and updated}
\label{fig:figure_8}
\end{center}
\end{figure}

\begin{figure}
\begin{center}
\includegraphics[scale=0.5]{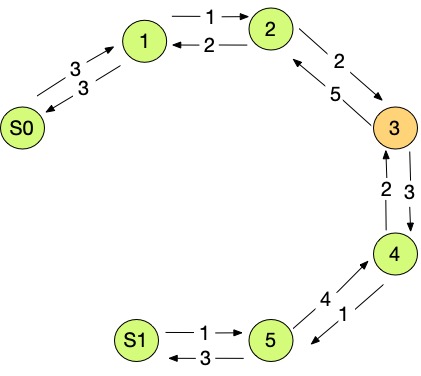}
\caption{A case in which the heuristic for finding the centroid fails to find the optimal solution}
\label{fig:figure_9}
\end{center}
\end{figure}

To illustrate how the algorithm works, we provide the following example (see Example~\ref{ex:example1}). In the example we have two source nodes, Nodes 1 and 6. The graph is both undirected and contains all positive, but not identical weights and has 8 total vertices and 12 total edges.

\begin{example}[label={ex:example1}]{Finding the Center with No Stopping Condition}
\textbf{Initialization:}\\
Create a priority queue for both sources:\\
\hspace*{2em} a) Priority Queue 0: Node 1 distance is set to 0 and all others are set to $\infty$\\
\hspace*{2em} b) Priority Queue 1: Node 6 distance is set to 0 and all others are set to $\infty$\\
Then, we perform steps 2 and 3 for each source in alternating fashion for all 8 nodes in the graph\\

\textbf{Iteration 1:}(see Figure~\ref{fig:figure_1})\\
\textit{Selection:} \\
\hspace*{2em} a) Priority Queue 0: Node 1 is extracted\\  
\hspace*{2em} b) Priority Queue 1: Node 6 is extracted\\
\textit{Relaxation: (Update Distances for Nodes)} \\  
\hspace*{2em} a) Priority Queue 0: 1) Node 5 : Min(Inf, 0 + 4)  2) Node 3 : Min(Inf, 0 + 7)  3) Node 0 : Min(Inf, 0 + 3)\\
\hspace*{2em} b) Priority Queue 1: 1) Node 4: Min(Inf, 0 + 4)  2) Node 2: Min(Inf, 0 + 6)  3)Node 7: Min(Inf, 0 + 5)\\

\textbf{Iteration 2:}(see Figure~\ref{fig:figure_2})\\
\textit{Selection:} \\
\hspace*{2em} a) Priority Queue 0: Node 0 is extracted\\
\hspace*{2em} b) Priority Queue 1: Node 4 is extracted\\
\textit{Relaxation: (Update Distances for Nodes)} \\
\hspace*{2em} a) Priority Queue 0: 1) Node 4: Min(Inf, 3 + 1) 2) Node 2: Min(Inf, 3 + 5)\\
\hspace*{2em} b) Priority Queue 1: 1) Node 5: Min(Inf, 4 + 8) 2) Node 0: Min(Inf, 4 + 1)\\

\textbf{Iteration 3:} (see Figure~\ref{fig:figure_3})\\
\textit{Selection:} \\
\hspace*{2em} a) Priority Queue 0: Node 5 is extracted\\
\hspace*{2em} b) Priority Queue 1: Node 7 is extracted\\
\textit{Relaxation: (Update Distances for Nodes)} \\
\hspace*{2em} a) Priority Queue 0: 1) Node 7: Min(Inf, 4 + 5) 2) Node 4: Min(4, 4 + 8)\\
\hspace*{2em} b) Priority Queue 1: 1) Node 3: Min(Inf, 5 + 3) 2) Node 5: Min(12, 5 + 2)\\

\textbf{Iteration 4:}(see Figure~\ref{fig:figure_4})\\
\textit{Selection:} \\
\hspace*{2em} a) Priority Queue 0: Node 4 is extracted, maxDist = 4\\
\hspace*{2em} b) Priority Queue 1: Node 0 is extracted, maxDist = 5\\
\textit{Relaxation: (Update Distances for Nodes)} \\
\hspace*{2em} a) Priority Queue 0: 1) Node 6: Min(Inf, 4 + 4)\\
\hspace*{2em} b) Priority Queue 1: 1) Node 1: Min(Inf, 5 + 3)\\

\textbf{Iteration 5:}(see Figure~\ref{fig:figure_5})\\
\textit{Selection:} \\
\hspace*{2em} a) Priority Queue 0: Node 7 is extracted, maxDist = 6\\
\hspace*{2em} b) Priority Queue 1: Node 2 is extracted\\
\textit{Relaxation: (Update Distances for Nodes)} \\
\hspace*{2em} a) Priority Queue 0: 1) Node 3: Min(7, 6 + 3) 2) Node 6: Min(8, 6 + 5)\\
\hspace*{2em} b) Priority Queue 1: 1) Node 1: Min(8, 6 + 2)\\

\textbf{Iteration 6:} (see Figure~\ref{fig:figure_6})\\
\textit{Selection:} \\
\hspace*{2em} a) Priority Queue 0: Node 3 is extracted\\
\hspace*{2em} b) Priority Queue 1: Node 5 is extracted, maxDist = 7\\
\textit{Relaxation: (Update Distances for Nodes)} \\
\hspace*{2em} a) Priority Queue 0: 1) Node 2: Min(8, 7+ 2)\\
\hspace*{2em} b) Priority Queue 1: 1) Node 1: Min(8, 7 + 4)\\

\textbf{Iteration 7:} (see Figure~\ref{fig:figure_7})\\
\textit{Selection:} \\
\hspace*{2em} a) Priority Queue 0: Node 2 is extracted, maxDist = 8\\
\hspace*{2em} b) Priority Queue 1: Node 3 is extracted, maxDist = 8\\
\textit{Relaxation: (Update Distances for Nodes)} \\
\hspace*{2em} a) Priority Queue 0: 1) Node 6: Min(8, 8 + 6)\\
\hspace*{2em} b) Priority Queue 1: 1) Node 3: Min(8, 8 + 7)\\

\textbf{Iteration 8:} (see Figure~\ref{fig:figure_8})\\
\textit{Selection:} \\
\hspace*{2em} a) Priority Queue 0: Node 6 is extracted, maxDist = 8\\
\hspace*{2em} b) Priority Queue 1: Node 1 is extracted, maxDist = 8\\

\textbf{Stopping}:\\
All nodes have been visited. Find node with minimum eccentricity (Node 4) and return node and corresponding maximum distances (maxDist = 4).

\end{example}

From the example, we can see that the smallest maximum distance occurs at Node 4 and therefore Node 4 is the center solution with a maximum distance of 4 for this example. In a future section, we will show that the stopping condition can be improved to significantly reduce the number of nodes processed while still being able to find the optimal center.

\subsection{Centroid}
\label{sec:Centroid}
In some scenarios, one may wish to find a solution that minimizes total time/distance to a common location instead of minimizing the maximum time/distance to a common location. This is equivalent to finding the centroid of the S-source-center problem. To find the centroid of the S-source-center problem we have to solve the following optimization problem:
\begin{equation}
\hat{v} = \argmin_{v_i \in V} \sum_{j}^{} d(v_i, s_j)
\end{equation}
where V is the set of all vertices within the graph and $s_j \in S$ represents the jth source node with S representing the set of all sources.

Similar to finding the center of the S-source-center problem, we can apply an algorithm based on a multiple source Dijkstra's Algorithm. Instead of optimizing for the function $f(v_i)$ we instead optimize for $g(v_i)$ where $g(v_i) = \sum_{j}^{} d(v_i, s_j)$ . The node with the minimum value of $g(v_i)$ is the centroid. The algorithm follows the same steps as section \ref{sec:Center}, except we keep track of the sum of distances to each source rather than the maximum distance to each source node.

\section{Algorithmic Optimization}

\subsection{Early Termination for Center}
\label{sec:Early Termination for Center}
By combining our algorithm with a better stopping condition, we can reduce the amount of nodes explored, increasing efficiency without sacrificing accuracy. In this section, we propose a stopping condition that adds an additional check upon finding the first intersection point . After finding the first intersection point $v_u$ (i.e. an extracted node that has been visited by all sources), a variable called \textit{minimax} is assigned the value of f($v_u$). The algorithm then continues the alternating Dijkstra's Algorithm but with an additional check whenever another intersection point is encountered. If the next intersection node discovered has a maximum distance value that is less than \textit{minimax}, then \textit{minimax} is updated to equal the node's maximum distance. Furthermore, we claim that a Dijkstra's Algorithm for a given source may be terminated if an extracted node has a distance larger than \textit{minimax}. The following Theorem makes this claim formal.

\begin{theorem*}
Assume that there are $S$ Dijkstra's Algorithms initiated from $S$ different source nodes. Once a node has been visited by all $S$ Dijkstra's Algorithms, then let these nodes be called $intersection$ nodes and let $d_{max}$ represent the maximum distance to all source nodes from one of the intersection nodes.  Then, for any of the $S$ Dijkstra's Algorithms, if a node is extracted from its priority queue with a distance that exceeds $d_{max}$, then all remaining nodes that are to be explored for that Dijkstra's Algorithm cannot result in a smaller maximum distance than $d_{max}$.
\end{theorem*}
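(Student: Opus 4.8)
The plan is to reduce the statement to the fundamental correctness/monotonicity property of a single-source Dijkstra run and then append a one-line bound. The property I will lean on is: within the Dijkstra's Algorithm rooted at a fixed source $s_k$, nodes are finalized (extracted) in non-decreasing order of their \emph{true} shortest-path distance from $s_k$; equivalently, at the instant a node $w$ is extracted its label already equals the true distance $d(w,s_k)$, and every node not yet finalized by that run has true distance from $s_k$ at least $d(w,s_k)$ (since a node with smaller true distance would have been extracted earlier). The alternation among the $S$ priority queues does not disturb this, because each queue evolves exactly as in an isolated Dijkstra run — interleaving merely reorders operations acting on disjoint data.

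I would then proceed in three steps. First, fix one of the $S$ runs, say the one rooted at $s_k$, and suppose it extracts a node $w$ whose finalized distance satisfies $d(w,s_k) > d_{max}$, where $d_{max} = \max_{s_j \in S} d(v_u,s_j)$ for some intersection node $v_u$. Second, apply the monotone-extraction property above: every ``remaining'' node $x$ of the $s_k$-run — i.e.\ every node not yet extracted by that run — will ultimately be finalized with $d(x,s_k) \ge d(w,s_k)$, so $d(x,s_k) \ge d(w,s_k) > d_{max}$. Third, bound the $S$-source eccentricity of $x$ below by its distance to the single source $s_k$: $\max_{s_j \in S} d(x,s_j) \ge d(x,s_k) > d_{max}$ (for undirected graphs $d$ is symmetric; in the directed case one reads each run as computing the appropriate one-sided distances). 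Hence no remaining node of the $s_k$-run can attain a maximum distance smaller than $d_{max}$, which is exactly the claim, and that run may therefore be halted.

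I do not expect a deep obstacle, only a bookkeeping one: I must state precisely that the ``remaining nodes to be explored'' are the nodes not yet finalized by the run in question, that their eventual distances are still governed by the monotone-extraction lemma even though the $S$ queues are serviced on an alternating schedule, and that the label carried at extraction time is already the exact distance — so the hypothesis ``a node is extracted with distance exceeding $d_{max}$'' is an assertion about a true shortest-path distance, not a stale overestimate. Once those points are pinned down the inequality chain is immediate. Finally I would remark on the intended use: invoking the theorem with $d_{max}$ taken to be the current \emph{minimax} value (the smallest $d_{max}$ over all intersection nodes discovered so far) lets each of the $S$ runs terminate early while provably never discarding a node that could improve upon the best center found.
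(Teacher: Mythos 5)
Your proposal is correct and follows essentially the same route as the paper's proof: both rest on the fact that each per-source Dijkstra run finalizes nodes in non-decreasing order of true distance, so an extraction exceeding $d_{max}$ forces every remaining node's distance to that source, and hence its maximum distance over all sources, above $d_{max}$. Your version simply makes explicit the bookkeeping (exactness of labels at extraction, immunity to the alternating schedule) that the paper leaves implicit.
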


\begin{proof}
Dijkstra's Algorithm extracts nodes in a non decreasing manner, implying that every extracted node's distance cannot be smaller than a previously extracted node's distance. If an extracted node has a distance, $d$, larger than $d_{max}$, then any extracted node after that will have a distance larger than $d$ and therefore larger than $d_{max}$. Therefore, any maximum distance that is calculated based on remaining nodes cannot result in a smaller maximum distance than $d_{max}$.
\end{proof}

Given the above Theorem, we know that if an extracted node has a distance greater than \textit{minimax}, then the extracted node and all successive nodes will result in a maximum distance larger than \textit{minimax} and therefore none of these nodes can be a center. Our new stopping condition for each Dijkstra's Algorithm is to check if an extracted node has distance larger than \textit{minimax} and if it does, then terminate that Dijkstra's Algorithm. Our new algorithm with a modified stopping condition is as follows (see Algorithm~\ref{alg:centerStoppingAlgorithm}):

\begin{algorithm}
\caption{Multiple Source Dijkstra's Algorithm for Finding the Center with an Improved Stopping Condition}
\label{alg:centerStoppingAlgorithm} 
1) Initialization: Create a priority queue for each source node and initialize the distance of source nodes to 0 and all the other nodes to $\infty$. Set $minimax$ to $\infty$.

2) Selection: Pick the unvisited node with the smallest d($s_i$,v) (initially the source node) and extract node

3) Relaxation: For the extracted node, check all neighboring nodes and compare/update its distances. If extracted node has been visited by all sources, compare its maximum distance to $minimax$.  If it is less than $minimax$, then set $minimax$ to this maximum distance.

4) Alternation: The extracted node is marked as visited and will not be visited again. If the extracted node has a distance that is larger than $minimax$ then go to step 5.  Otherwise, repeat steps 2-3 alternating between each of the sources that have not been terminated, for all nodes.

5) Stopping: If all sources have been terminated then the center is the node with value $minimax$ and $minimax$ is the smallest maximum distance.  Otherwise, repeat steps 2-3 alternating between each of the sources that have not been terminated, for all nodes. \\
\end{algorithm}

To illustrate how the algorithm works, we provide the following example. In the  example (see Example~\ref{ex:example2}) we have two source nodes, Nodes 1 and 6. The graph is both undirected and contains all positive, but not identical weights and has 8 total vertices and 12 total edges.

\begin{example}[label={ex:example2}]{Finding the Center with A Stopping Condition}
\textbf{Initialization:}\\
\hspace*{2em} Create a priority queue for both sources\\
\hspace*{2em} Set \textit{minimax} to $\infty$\\
\hspace*{2em} b) Priority Queue 0: Node 1 distance is set to 0 and all others are set to $\infty$\\
\hspace*{2em} a) Priority Queue 1: Node 6 distance is set to 0 and all others are set to $\infty$\\

Iterations 1-3 are the same as Example~\ref{ex:example1} and Iteration 4 contains a stopping condition.\\

\textbf{Iteration 4:} (see Figure~\ref{fig:figure_4})\\
\textit{Selection:} \\
\hspace*{2em} a) Priority Queue 0: Node 4 is extracted\\
\hspace*{2em} b) Priority Queue 1: Node 0 is extracted\\
\textit{Relaxation: (Update Distances for Nodes)} \\
\hspace*{2em} Compare maxDist of Node 0 and maxDist of Node 4 to \textit{minimax} and set \textit{minimax} to 4\\
\hspace*{2em} a) Priority Queue 0: 1) Node 6: Min(Inf, 4 + 4)\\
\hspace*{2em} b) Priority Queue 1: 1) Node 1: Min(Inf, 5 + 3)\\
\textbf{Stopping}:\\
The maxDist value for Node 0 is 5 which is greater than the \textit{minimax}, therefore terminate Dijkstra's Algorithm from Source Node 6.\\

\textbf{Iteration 5:} (see Figure~\ref{fig:figure_5})\\
\textit{Selection:}\\
\hspace*{2em} a) Priority Queue 0: Node 7 is removed\\
\textbf{Stopping:}\\
The maxDist value for Node 7 is 6 which is greater than the \textit{minimax}, ending the Dijkstra's Algorithm from Source Node 1.  Since all Dijkstra's Algorithms are terminated the algorithm is also terminated with minimax = 4 at node 4.
\end{example}

From the example, we can see that the smallest maximum distance (i.e., \textit{minimax}) is equal to 4 at the end of the algorithm and node 4 is where \textit{minimax} occurs.  So, the center solution is node 4 with a maximum distance of 4 for this example. The stopping condition has greatly reduced the number of nodes explored as both Source Nodes 1 and 6 only explored 4 out of the 8 nodes. This example explored less nodes than example~\ref{ex:example1}, resulting in only 50\% of nodes being explored.

\subsection{Early Termination for Centroid}
\label{sec:Early Termination for Centroid}

The theorem that led to the stopping condition for finding the center does not apply for finding the centroid because the stopping condition utilizes the fact that the priority queue sorts nodes based on their minimum distance value. However, the priority queue for a given Dijkstra's Algorithm is not necessarily aware of the distance from a node to other sources and therefore can not explore nodes in the order of increasing sum of distances. In particular, the theorem that we used to create the stopping condition for finding the center does not directly apply for finding the centroid. We can still however use the theorem as a heuristic to find a relative minimum of the sum of distances to the sources. Specifically, we can check for any intersection node whether the sum of distances is larger than the previously stored minimum sum of distances. If it is, then we can update this minimum sum of distances. Otherwise, we can terminate the Dijkstra's Algorithm that resulted in a larger sum of distances. Our new algorithm with a modified stopping condition is as described in  Algorithm~\ref{alg:centroidStoppingAlgorithm}. To illustrate how the algorithm works, we provide an example (see Example~\ref{ex:example3}).

\begin{algorithm}
\caption{Multi Sourced Dijkstra's Algorithm for Finding Centroid}
\label{alg:centroidStoppingAlgorithm}
1) Initialization: Create a priority queue for each source node and initialize the distance of source nodes to 0 and all the other nodes to $\infty$. Set $minsum$ to $\infty$.

2) Selection: Pick the unvisited node with the smallest d($s_i$,v) (initially the source node) and extract node

3) Relaxation: For the extracted node, check all neighboring nodes and compare/update its distances. If extracted node has been visited by all sources, compare its sum to $minsum$.  If it is less than $minsum$, then set $minsum$ to this sum.

4) Alternation: The extracted node is marked as visited and will not be visited again. If the extracted node has been visited by all sources and results in a sum that is larger than $minsum$ then go to step 5.  Otherwise, repeat steps 2-3 alternating between each of the sources that have not been terminated, for all nodes.

5) Stopping: If all sources have been terminated then the centroid is the node with value $minsum$ and $minsum$ is the smallest sum. Otherwise, repeat steps 2-3 alternating between each of the sources that have not been terminated, for all nodes. \\
\end{algorithm}

\begin{example}[label={ex:example3}]{Finding the Centroid with A Stopping Condition}
\textbf{Initialization:}\\
\hspace*{2em} Create a priority queue for both sources\\
\hspace*{2em} Set minsum to $\infty$\\
\hspace*{2em} a) Priority Queue 0: Node 1 distance is set to 0 and all others are set to $\infty$\\
\hspace*{2em} b) Priority Queue 1: Node 6 distance is set to 0 and all others are set to $\infty$\\

Iterations 1-3 are the same as Example~\ref{ex:example1} and Iteration 4 contains a stopping condition.\\

\textbf{Iteration 4:} (see Figure~\ref{fig:figure_4} )\\
\textit{Selection:} \\
\hspace*{2em} a) Priority Queue 0: Node 4 is extracted\\
\hspace*{2em} b) Priority Queue 1: Node 0 is extracted\\
\textit{Relaxation: (Update Distances for Nodes)} \\
\hspace*{2em} a) Priority Queue 0: 1) Node 6: Min(Inf, 4 + 4)\\
\hspace*{2em} b) Priority Queue 1: 1) Node 1: Min(Inf, 5 + 3)\\
\textbf{Stopping}:\\
First Intersection Node is found at Node 4 and Node 0\\
The variable $minsum$ is assigned the value of the sum of distances from Node 4 to each of the sources which is 8. The sum of distances from Node 0 to each of the sources is also 8 thus resulting in the continuation of both Dijkstra's Algorithms.\\

\textbf{Iteration 5:} (see Figure~\ref{fig:figure_5} )\\
\textit{Selection:} \\
\hspace*{2em} a) Priority Queue 0: Node 7 is extracted\\
\hspace*{2em} b) Priority Queue 1: Node 2 is extracted\\
\textit{Relaxation: (Update Distances for Nodes)} \\
\hspace*{2em} a) Priority Queue 0: 1) Node 3: Min(7, 6 + 3) 2) Node 6: Min(8, 6 + 5)\\
\hspace*{2em} b) Priority Queue 1: 1) Node 1: Min(8, 6 + 2)\\
\textbf{Stopping}:\\
The next intersection node is Node 7. The value of the extracted Node 7 has a sum of distances equal to 11, which is greater than the $minsum$ value of 8, terminating the Dijkstra's Algorithm for Source Node 1.\\

\textbf{Iteration 6:} (see Figure~\ref{fig:figure_6} )\\
\textit{Selection:} \\
\hspace*{2em} b) Priority Queue 1: Node 5 is extracted\\
\textit{Relaxation: (Update Distances for Nodes)} \\
\hspace*{2em} b) Priority Queue 1: 1) Node 1: Min(8, 7 + 4)\\
\textbf{Stopping}:\\
The next intersection node is Node 5. The value of the extracted Node 5 has a sum of distances equal to 11, which is greater than the $minsum$ value of 8, terminating the Dijkstra's Algorithm for Source Node 6 and for the whole algorithm. The resulting value of $minsum$ is 8, which occurs at node 4.
\end{example}

From the example, we can see that the smallest sum occurs at node 4 and therefore node 4 is the centroid solution with a sum of 8 for this example. The percent of nodes that are explored is 62.5\% for Source Node 1 and 75\% for Source Node 6 resulting in a total of 68.75\% of nodes being explored.  In this example the solution that was found happens to coincide with the optimal centroid solution, but this is not always the case.  In Figure~\ref{fig:figure_9} we provide an example in which we do not get an optimal solution by utilizing algorithm~\ref{alg:centroidStoppingAlgorithm}. This happens because algorithm~\ref{alg:centroidStoppingAlgorithm} ends up finding a relative minimum instead of a global minimum. In the Figure, the algorithm terminates at node 3 with a sum distance of 13, however, there exists a smaller sum of 11 at node 5.

\subsection{Solution Utilizing A* Algorithm}
\label{sec:Solution Utilizing A*}
To adapt A* algorithm to our problem of finding the center of $S$ sources, we can utilize the same algorithm as in algorithm~\ref{alg:centerAlgorithm} except we will run A* algorithm from each source node. Since A* algorithm without a heuristic function is essentially the same as Dijkstra's Algorithm, applying a heuristic function to A* algorithm will reduce the number of explored nodes without reducing accuracy. The heuristic function that we will use guides the A* algorithm based off of the distances to each source node. We introduce the cost function as:

\begin{equation}
f(n) = g(n) + h(n)
\end{equation}
where the heuristic function is:
\begin{equation}
h(n) = \max_{s_j \in S}(d(v_{i}, s_{j}))
\end{equation}
and $v_{i}$ is the current vertex and $S$ is the set of all sources.

The heuristic function is the maximum distance from the current vertex to all sources. The same heuristic function can also be used for finding the centroid. Note that the heuristic function should be chosen such that it is smaller than the actual distance between two nodes. In the case that distance refers to time, we may choose to use a heuristic function that is equal to the maximum distance divided by the maximum speed to all sources.

A* algorithm maintains the same priority queue as Dijkstra's Algorithm, so we are able to alternate between sources to have an alternating A* algorithm. Note that the stopping condition derived from our theorem may not be optimal when applied to A* algorithm. This is because A* algorithm does not necessarily process nodes in the order of increasing $g(n)$, but rather processes nodes in the order of increasing $f(n)$. Nonetheless, one may choose to use our theorem as a heuristic in combination with A* algorithm to further reduce compute at the expense of sacrificing some accuracy.

\section{Experimental Setup}
\label{sec:Experimental Setup}
To test the algorithms described in the previous sections, we wrote C code to test the percent of nodes explored by using the optimized algorithm for finding the center algorithm as well as the accuracy of the percent of nodes explored by using the optimized algorithm for finding the centroid (see section \ref{sec:Centroid}). We also measure the accuracy of our algorithm for finding the centroid node. We wrote C code to randomly generate graphs. For a fixed number of vertices and source nodes, our program would randomly generate directed edges with random weights between 1 and 100. The program would also randomly choose $n$ vertices within the graph as source nodes. We discounted any graphs where the graph was disconnected and were unable to reach any intersecting nodes. In practice, most of the time, it will be a small number of people who want to find a meeting place, but in some cases, there may be more people who want to find a meeting place. As a result, we chose to test {2, 3, 5, 10} sources. Additionally, we utilized a wide range of vertices ({20, 50, 100, 500}) to try and capture both small and large graphs. By running 1000 iterations for each of the combinations between source nodes {2, 3, 5, 10} and vertices {20, 50, 100, 500}, we were able to obtain a wide range of results. Specifically, the randomly chosen nodes generated graphs that are representative of both sparse and dense graphs.

Note that we did not run simulations using A* algorithm, as the savings achieved from A* algorithm is orthogonal to the savings achieved from the stopping conditions that we proposed. One may choose to combine A* algorithm with our stopping conditions to achieve further savings, but this is left as a future work.

\section{Experimental Results}
\label{sec:Experimental Results}

\begin{table}[h!]
\begin{center}
\begin{tabular}{cccccl}\toprule
& \multicolumn{4}{c}{Number of Vertices} 
\\\cmidrule(lr){2-5}
        Number of \\Source Nodes  &  $20$  & 50 & 100 & 500  \\\midrule
2  & $28.483870$ & $18.643416$ & $13.993495$  & $7.948100$ & \\
    & $14.61$ & $10.74$ & $9.50$ & $4.92$ & \\
3 & $42.058662$ & $30.736375$ & $25.613805$ & $18.473196$ \\
   & $14.81$ & $11.27$ & $11.09$ & $8.60$ & \\
5 & $56.208606$ & $45.070680$ & $40.851895$ & $35.130090$ \\
   & $13.51$ & $12.43$ & $11.68$ & $10.98$ & \\
10  & $70.214316$ & $61.115701$ & $57.770726$ & $55.757189$ \\
      & $11.74$ & $11.35$ & $11.11$ & $12.41$ & \\\bottomrule
\end{tabular}
\caption{Average (top) and standard deviation (bottom) of percentage of nodes explored for minimum of maximum distances}
\label{tab:table1}
\end{center}
\end{table}

In our first experiment, we tried finding the center of various graphs using algorithm~\ref{alg:centerStoppingAlgorithm} and compared it to algorithm~\ref{alg:centerAlgorithm}. We compared the percentage of nodes explored using algorithm~\ref{alg:centerStoppingAlgorithm} versus the number of nodes explored in algorithm~\ref{alg:centerAlgorithm} (see Table~\ref{tab:table1}). Mathematically, the percentage is calculated by the following equation. 
\begin{equation}
percentage = \frac{k} {n} *100
\end{equation}

\noindent where $k$ represents the number of nodes explored with algorithm~\ref{alg:centerStoppingAlgorithm} and $n$ represents the number of nodes explored with algorithm~\ref{alg:centerAlgorithm}. In Table~\ref{tab:table1}, we see that the average percentage of nodes explored is directly proportional to the number of source nodes. The reason for this is because as the number of source nodes increases, the likelihood of one of the source nodes being further away from the center increases, hence resulting in more nodes being explored before an intersection node is even found. On the other hand, the average percentage of nodes explored is inversely proportional to the number of vertices due to our stopping condition. Specifically, our stopping condition allows for efficient termination once an intersection node is found and a larger number of vertices implies a larger number of edges which increases the likelihood of shorter paths to intersection nodes. As a result, fewer nodes need to be explored for a graph that has more vertices (e.g. 500) with random edges than a graph with less vertices (e.g. 20). We notice anywhere from a 1.5x to a 12x savings in nodes explored depending on variations in the amount of vertices and number of nodes for the simulation.

\begin{table}[h!]
\begin{center}
\begin{tabular}{cccccl}\toprule
& \multicolumn{4}{c}{Number of Vertices} 
\\\cmidrule(lr){2-5}
        Number of \\Source Nodes  &  $20$  & 50 & 100 & 500  \\\midrule
2  & $35.342591$ & $21.790436$ & $15.785197$  & $7.319593$ & \\
    & $15.54$ & $10.16$ & $9.84$ & $2.40$ & \\
3 & $49.674350$ & $33.779958$ & $26.540579$ & $16.002276$ \\
   & $13.15$ & $8.07$ & $8.05$ & $3.55$ & \\
5 & $65.202754$ & $48.324506$ & $40.517998$ & $29.499630$ \\
   & $9.80$ & $8.31$ & $7.77$ & $4.25$ & \\
10  & $81.099321$ & $66.013424$ & $57.087868$ & $47.448422$ \\
      & $6.68$ & $6.48$ & $6.25$ & $5.57$ & \\\bottomrule
\end{tabular}
\caption{Average (top) and standard deviation (bottom) of percentage of nodes explored for minimum of sum distances}
\label{tab:table2}
\end{center}
\end{table}

The next experiment that we ran was to find the centroid of various graphs using algorithm~\ref{alg:centroidStoppingAlgorithm} and compared it to algorithm~\ref{alg:centerAlgorithm}. We compared the percentage of nodes explored using algorithm~\ref{alg:centroidStoppingAlgorithm} to the number of nodes explored using algorithm~\ref{alg:centerAlgorithm} (see Table~\ref{tab:table2}). Mathematically, the percentage is calculated as:  
\begin{equation}
percentage = \frac{k} {n} *100
\end{equation}
where $k$ represents number of nodes explored with algorithm~\ref{alg:centroidStoppingAlgorithm} and $n$ represents number of nodes explored with algorithm~\ref{alg:centerAlgorithm}. In Table~\ref{tab:table2}, like Table~\ref{tab:table1}, we can draw many of the same conclusions as the previous experiment because  algorithm~\ref{alg:centroidStoppingAlgorithm} maintains many of the same aspects as algorithm~\ref{alg:centerStoppingAlgorithm}. However, we do see that the averages in Table~\ref{tab:table2} are almost always larger than the ones in Table~\ref{tab:table1}. This is due to the fact that in algorithm~\ref{alg:centroidStoppingAlgorithm}, nodes must be marked as visited by all sources before comparing a sum for termination whereas in algorithm~\ref{alg:centerStoppingAlgorithm}, a node does not need to be visited by every source for checking the termination condition. This stricter termination condition results in more nodes being explored in the case of the centroid algorithm. Similar to the previous experiment, we notice a savings in nodes explored of around 2x to 12x depending on variations in the amount of vertices and number of nodes for the simulation.

\begin{table}[h!]
\begin{center}
\begin{tabular}{cccccl}\toprule
& \multicolumn{4}{c}{Number of Vertices} 
\\\cmidrule(lr){2-5}
        Number of \\Source Nodes  &  $20$  & 50 & 100 & 500  \\\midrule
2  & $108.514600$ & $112.411543$ & $113.197901$  & $111.620521$ & \\
    & $15.54$ & $10.16$ & $9.84$ & $2.40$ & \\
3 & $105.778272$ & $107.962289$ & $108.237160$ & $108.429273$ \\
   & $13.15$ & $8.07$ & $8.05$ & $3.55$ & \\
5 & $102.676250$ & $103.993519$ & $104.353387$ & $106.003617$ \\
   & $9.80$ & $8.31$ & $7.77$ & $4.25$ & \\
10  & $100.654489$ & $101.133570$ & $101.613492$ & $103.108884$ \\
      & $6.68$ & $6.48$ & $6.25$ & $5.57$ & \\\bottomrule
\end{tabular}
\caption{Average (top) and standard deviation (bottom) of the percent difference between the sum that results from the proposed algorithm over the sum that results from the optimal algorithm}
\label{tab:table3}
\end{center}
\end{table}

Due to the fact that the solution found in algorithm~\ref{alg:centroidStoppingAlgorithm} may not be the optimal centroid, we also want to measure how far (on average) the solution is from the optimal. In Table~\ref{tab:table3}, we calculate the percent difference of the sum that results from algorithm~\ref{alg:centroidStoppingAlgorithm} over the sum that results from the algorithm described in section \ref{sec:Centroid}. As the number of source nodes increases, the accuracy approaches a near perfect value of 100\%. While the accuracy is not perfect for various simulation parameters, it is more than good enough for practical situations given its savings of nodes explored of nearly 2x to 12x for most cases. Recall that our end goal is to find a "fair" meeting place for S people, so if the resulting centroid solution is a few percent larger than the optimal centroid solution, this will most likely be acceptable.

\begin{table}[h!]
\begin{center}
\begin{tabular}{cccccl}\toprule
& \multicolumn{4}{c}{Number of Vertices} 
\\\cmidrule(lr){2-5}
        Number of \\Source Nodes  &  $20$  & 50 & 100 & 500  \\\midrule
2 & $77.8367$ & $68.961973$ & $63.333333$  & $63.947633$ & \\
3 & $78.3405$ & $67.598344$ & $61.663286$ & $55.790534$ \\
5 & $84.0088$ & $74.948025$ & $65.853659$ & $48.036254$ \\
10  & $91.6667$ & $83.874346$ & $77.323800$ & $52.366566$ \\\bottomrule
\end{tabular}
\caption{Average of Accuracy for Centroid Locating Algorithm}
\label{tab:table4}
\end{center}
\end{table}

Finally, we also measure the percentage of times that algorithm~\ref{alg:centroidStoppingAlgorithm} finds the true centroid (see table~\ref{tab:table4}). We notice that the accuracy decreases as the number of vertices increases and increases as the number of source nodes increases. This is consistent with our previous observations. As the number of source nodes increases, the number of nodes explored also increases, resulting in a higher likelihood of finding the true centroid. Similarly, as the number of vertices increases, the number of nodes explored decreases resulting in a lower likelihood of finding the true centroid.

\section{Conclusion}
\label{sec:Conclusion}
We proposed solutions to the problem of finding the center of S sources by utilizing S different Dijkstra's Algorithms for both the centroid and center. Furthermore, we proposed optimizations to reduce the amount of exploration of said algorithm by a factor of 2x to 12x. We also provided an optimal solution for finding the center of a graph utilizing S Dijkstra's Algorithms and a stopping condition that significantly reduces the time complexity. While we were unable to find an optimal solution for the centroid, we found an algorithm with less than 10 percent degradation in accuracy on average, but still maintained savings of nodes explored of 2x to 12x. We addressed the real world problem of finding a "fair" meeting place through experimentation and a usage of a wide variety of randomly generated graphs. The problem is applicable for mapping applications and our optimizations allow us to find a meeting spot for S different people with efficient time complexity.

For future work, we hope to find an algorithm that can lead to efficient updates for the center/centroid when a small number of edge weights change as a function of time. Other directions for future work are to explore combining the A* algorithm with our stopping condition and also to improve the accuracy of our solution to finding the centroid.

\bibliographystyle{plainnat}
\bibliography{NSourceBib}

\end{document}